\newtheorem{thm}{Theorem}
\newtheorem{cor}[thm]{Corollary}
\newtheorem{defi}[thm]{Definition}
\newtheorem{rem}[thm]{Remark}
\newtheorem{nota}[thm]{Notation}
\newtheorem{exa}[thm]{Example}
\newtheorem{princ}[thm]{Principle}
\newcommand\be{\begin{equation}}
\newcommand\ee{\end{equation}}
\newbox\gnBoxA
\newdimen\gnCornerHgt
\newdimen\gnArgHgt
\def\Godelnum #1{%
	\setbox\gnBoxA=\hbox{$#1$}%
	\gnArgHgt=\ht\gnBoxA%
	\ifnum \gnArgHgt<\gnCornerHgt
		\gnArgHgt=0pt%
	\else
		\advance \gnArgHgt by -\gnCornerHgt%
	\fi
	\raise\gnArgHgt\hbox{$\ulcorner$} \box\gnBoxA %
		\raise\gnArgHgt\hbox{$\urcorner$}}
\def\bdefi{\begin{defi}\rm}
\def\edefi{\end{defi}}
\def\bnota{\begin{nota}\rm}
\def\enota{\end{nota}}
\def\brem{\begin{rem}\rm}
\def\erem{\end{rem}}
\def\NSA{\mathbb{NSA}}
\def\RCA{\textup{RCA}}
\def\LPRO{\mathbb{LPR}}
\def\ASA{\Lleftarrow\!\!\Rrightarrow}
\def\WKL{\textup{WKL}}
\def\bye{\end{document}}
\def\N{{\mathbb  N}}
\def\Q{{\mathbb  Q}}
\def\R{{\mathbb  R}}
\def\DI{\Rrightarrow}
\def\WLPO{{\textup{WLPO}}}
\def\LPO{{\textup{LPO}}}
\def\LLPO{{\textup{LLPO}}}
\def\MP{{\textup{MP}}}
\def\MPO{{\textup{MP}^{\vee}}}
\def\LPOO{{\mathbb{LPO}}}
\def\LLPOO{{\mathbb{LLPO}}}
\def\MPO{{\mathbb{MP}}}
\def\WLPOO{{\mathbb{WLPO}}}
\def\MCTO{\mathbb{MCT}}
\def\sN{{^{*}\mathbb{N}}}
\def\T{{\mathbb  T}}
\def\R{{\mathbb{R}}}
\def\({\textup{(}}
\def\){\textup{)}}
\def\asa{\leftrightarrow}
\def\di{\rightarrow}
\def\eps{\varepsilon}
\def\vx{\vec{x}}
\def\paai{\Pi_{1}\textup{-TRANS}}
\def\HD{\mathop{\mathbb{V}}}
\title{Algorithm and proof as $\Omega$-invariance and transfer: A new model of computation in nonstandard analysis}
\author{Sam Sanders\thanks{This publication was made possible through the generous support of a grant from the John Templeton Foundation for the project \emph{Philosophical Frontiers in Reverse Mathematics}. I thank the John Templeton Foundation for its continuing support for the Big Questions in science.  Please note that the opinions expressed in this publication are those of the author and do not necessarily reflect the views of the John Templeton Foundation.}
\institute{Department of Mathematics, Universiteit Gent, Krijgslaan 281, 9000 Gent, Belgium}
\email{sasander@cage.ugent.be}}
\begin{document}
\maketitle

\begin{abstract}
We propose a new model of computation based on Nonstandard Analysis.
Intuitively, the role of `algorithm' is played by a new notion of finite procedure, called $\Omega$-invariance and inspired by physics, from Nonstandard Analysis.
Moreover, the role of `proof' is taken up by the Transfer Principle from Nonstandard Analysis.  We obtain a number of results in Constructive Reverse Mathematics to
illustrate the tight correspondence to Errett Bishop's \emph{Constructive Analysis} and the associated \emph{Constructive Reverse Mathematics}.
\end{abstract}


\section{Introduction}\label{intro}
Historically, the first models of computation go back to Alan Turing and Alonzo Church \cite{tochurchonsunday,turing37}.
In order to solve David Hilbert's famous \emph{Entscheidungsproblem} in the negative, Church and Turing both proposed a formal definition of algorithm (resp.\ the Turing machine and the $\lambda$-calculus).
In time, it became clear that these formalisms describe the same computational phenomenon, i.e., the recursive functions.
The importance of the discovery of the Turing machine and the $\lambda$-calculus for (what later would become) Computer Science cannot be underestimated.  Nonetheless, these formalisms have certain limitations.  Foremost, they operate on \emph{discrete} data, i.e., natural numbers (in some representation or other) are the basic concept.
The question naturally arises whether there is a theory of computability and complexity over \emph{real-valued data}, i.e., where the basic concept is that of real number.
An example of such a formalism is \emph{Computable Analysis} (See, e.g., \cite{brats,weikeei}).  Moreover, there is even a conference series\footnote{See the website \url{http://cca-net.de/events.html}.}, called {CCA}, dedicated to computability over the reals.  The following is an excerpt from the scope of CCA.  
\begin{quote}
\emph{The conference is concerned with the theory of computability and complexity over real-valued data.}
[\dots]
\emph{Most mathematical models in physics and engineering \emph{[\dots]} are based on the real number concept. Thus, a computability theory and a complexity theory over the real numbers and over more general continuous data structures is needed.}
\end{quote}
%
An American wordsmith might at this point refer to the legal oath concerning truthful testimony.  Indeed, while the observations made in the above quote are certainly valid, and the existing frameworks for real-number computability constitute highly respectable academic disciplines, the above observation is not \emph{the whole truth}.
For instance, an equally valid observation concerning physics and engineering is that an intuitive \emph{calculus with infinitesimals}, i.e., informal Nonstandard Analysis, is freely employed.
In particular, the notorious `$\eps$-$\delta$ method', due to Karl Weierstra\ss, was never adopted, although it hitherto constitutes the \emph{de facto standard} in mathematics, especially analysis.
These observations concerning physics and engineering beg the question whether
a notion of computability can be developed which is directly based
on infinitesimals.  In this paper, we sketch what such a formalism might look like.
In particular, we define a notion of \emph{finite procedure}, called $\Omega$-invariance, inside Nonstandard Analysis.
We also define three extra connectives $\HD$, $\DI$, and ${\sim}$, called \emph{hyperconnectives} inside (classical) Nonstandard Analysis.

\medskip

We arrive at the definition of the hyperconnectives motivated
by observations regarding infinitesimals in the mathematical practice of physics and engineering listed after Example \ref{firstexa}.
As it turns out, our new notion of algorithm and our new connectives bear a remarkable resemblance to the constructive
versions of algorithm and the connectives 
in BISH, i,e,\ Errett Bishop's \emph{Constructive Analysis} \cite{bish1,bridge1}.
Indeed, the definitions of $\DI$, $\HD$ and ${\sim}$ follow the well-known BHK (Brouwer-Heyting-Kolmorgorov) interpretation of intuitionistic logic \cite{buss}, but with provability and algorithm replaced by Transfer and $\Omega$-invariance, respectively.
In particular, we shall show that the hyperconnectives give rise to the same equivalences of Constructive Reverse Mathematics \cite{ishi1} \emph{inside} Nonstandard Analysis.
The following example is representative of our results.
\begin{exa}\label{firstexa}\rm
In BISH, LPO is the statement \emph{For all $P\in \Sigma_{1}$, we have $P\vee \neg P$}, i.e., the law of excluded middle limited to $\Sigma_{1}$-formulas.
Also, LPR is the statement that $(\forall x\in \R)(x>0 \vee \neg(x>0))$, the statement that the continuum can be split in two parts.
Now, BISH proves that LPO and LPR are equivalent, and neither principle is provable in BISH, for obvious reasons.
By Theorem \ref{uni2}, the equivalence between $\LPRO$, i.e., $(\forall x\in \R)(x>0 \HD {\sim}(x>0))$ and $\LPOO$, i.e., the statement \emph{For all $P\in \Sigma_{1}$, we have $P\HD {\sim}P$} is provable in a suitably weak system of Nonstandard Analysis, called $\NSA$, which can prove neither principle outright.
 \end{exa}
Let us first consider two observations regarding the use of infinitesimals in physics.  We assume the reader enjoys basic familiarity with Nonstandard Analysis, pioneered by Abraham Robinson \cite{robinson1}.

\medskip

First of all, we argue that end results of calculations (involving infinitesimals) in physics do not depend on the \emph{choice} of the infinitesimal used.
Indeed, in the course of providing a version of each of the \emph{Big Five} systems of Reverse Mathematics \cite{simpson2,keisler1},  Keisler has shown that sets of natural numbers (or, equivalently, real numbers) can be coded by infinite numbers, even in relatively weak systems\footnote{In particular, Keisler has shown that the statement \emph{Every set of natural numbers $X$ is coded by some infinite number $x$}, can be conservatively added to a nonstandard extension of $\RCA_{0}$.  Furthermore, the addition of the statement \emph{Every infinite number is the code for some set of natural numbers $X$}, results in a nonstandard extension of $\WKL_{0}$.   }.
Hence, a particular infinite number $\omega$ can contain a lot of information, as it may, e.g., code the Halting Problem.  In this way, a formula $\varphi(\omega)$, which depends on $\omega$,
may not be decidable, even if $\varphi$ has low complexity, e.g., $\Delta_{0}$.  However, for another infinite $\omega'$, say which codes the set of prime numbers, and the same $\Delta_{0}$-formula $\varphi$,
the formula $\varphi(\omega')$ would be intuitively decidable.  This key insight gives rise to our notion of finite procedure in Nonstandard Analysis in Definition \ref{kref}.

\medskip

Indeed, infinitesimals (which are the inverses of infinite numbers) are used freely in physics and engineering, in an informal way.
Nonetheless, they are generally believed to be mere calculus tools, whose sole purpose is to simplify mathematical derivation and calculation.
Furthermore, as physics and engineering deal with the modeling of the physical world,
the end result of a calculation (even using `ideal' objects such as infinitesimals) should have \emph{real-world meaning}.  Thus, the physically meaningful end result should not
depend on the \emph{choice} of infinitesimal used in the calculation, as common sense dictates that the real world does not depend on our arbitrary choice of calculus tool (i.e., infinitesimal).
Simply put, if we repeat the same calculation with a different infinitesimal, we expect that we should obtain the same result.
Hence, our first observation is that the end results in physics do not depend on the \emph{choice} of infinitesimal used in their derivation.  This is formalized directly in the notion of $\Omega$-invariance: our notion of finite procedure in Nonstandard Analysis, introduced in Definition \ref{kref} below.

\medskip

Secondly, we consider a universal formula of the form $(\forall n\in \N)\varphi(n)$, with $\varphi$ in $\Delta_{0}$.
By G\"odel's famous incompleteness theorem (See, e.g., \cite{buss}), we know that, no matter how obvious, appealing, true, natural, or
convinving $(\forall n\in \N)\varphi(n)$ may seem, this formula may not be provable in the formal system at hand.
Indeed, the consistency of a given system $T$ is a universal formula, and is in general not provable in $T$.  Hence, there might be (nonstandard) models in which there is a (nonstandard) $n$ such that $\neg\varphi(n)$.
While G\"odel's `unprovability' results are extremely interesting from the point of view of logic and mathematics, it is a phenomenon one would like to avoid in physics.
Indeed, it seems intuitively appealing that `ideal' nonstandard elements (e.g., infinitesimals and infinite numbers) satisfy the same properties as the `basic' standard objects.
This observation essentially dates back to Leibniz \cite{evenbellen} and is often called \emph{Leibniz' law}.  It is formalized in the \emph{Transfer Principle} of Nonstandard Analysis, but it turns out that even weak versions of this principle have too great a logical strength for systems that pertain to physical reality \cite{aloneatlast3}.
Furthermore, Theorem \ref{firstlast2} and \cite[Proposition 2.1]{palmdijk} also suggest that the Transfer Principle should be omitted.

\medskip

Besides the Transfer Principle, another obvious way of circumventing the existence of nonstandard counterexamples is to just demand (in the nonstandard model $\sN$ of $\N$ at hand) that $(\forall n\in \N)\varphi(n)$ satisfies
\be\label{frik}
(\forall n\in \N)\varphi(n) \wedge [(\forall n\in \N)\varphi(n) \di (\forall n\in \sN)\varphi(n)],
\ee
and to ignore $(\forall n\in \N)\varphi(n)$ otherwise.  The previous constitutes our second observation, to be compared to the definition of $\T$ below.
In a sense, this is also the avenue taken in Constructive Analysis: unprovable statements are essentially ignored by the very interpretation of the logical connectives.  
\section{A notion of finite procedure in Nonstandard Analysis}\label{owmega}
In this section, we introduce $\Omega$-invariance, the formalization of the intuitive idea from Section \ref{intro} that end results in physics do not depend on the \emph{choice} of infinitesimals used.
We show that $\Omega$-invariance is quite close to the notion of finite procedure.

\medskip

With regard to notation, we take $\N=\{0,1,2,\dots\}$ to denote the set of natural numbers,
which is extended to $\sN=\{0,1,2,\dots, \omega, \omega+1,\dots\}$, the set of \emph{hypernatural} numbers, with $\omega\not\in\N$.
The set $\Omega=\sN\setminus \N$ consists of the \emph{infinite} numbers, whereas the natural numbers are \emph{finite}.
Finally, a formula is \emph{bounded} or `$\Delta_{0}$', if all the quantifiers are {bounded} by terms and no infinite numbers occur.
\begin{defi}[$\Omega$-invariance]\label{kref} Let $\psi(n,m)$ be $\Delta_{0}$ and fix $\omega\in \Omega$.  \\
The formula $\psi(n,\omega)$ is \emph{$\Omega$-invariant} if  
\be\label{omegainv}
(\forall n\in \N)(\forall \omega'\in \Omega)(\psi(n,\omega)\asa \psi(n,\omega')).
\ee
For $f:\N\times\N\di \N$, the function $f(n,\omega)$ is called $\Omega$-invariant, if  
\be
(\forall n\in \N)(\forall \omega'\in \Omega)(f(n,\omega)=f(n,\omega')).
\ee
\edefi
As mentioned above, any object $\varphi(\omega)$ defined using an infinite number $\omega$ is potentially non-computable or undecidable, as infinite numbers can code (non-recursive) sets of natural numbers.
Hence, it is not clear how an $\Omega$-invariant object might be computable or constructive in any sense.
However, although an $\Omega$-invariant object clearly involves an infinite number, the object does not depend on the \emph{choice} of the infinite number, by definition.
Furthermore, by the following theorem, the truth value of $\psi(n,\omega)$ and the value of $f(n,\omega)$ is \emph{already} determined at some finite number.  
\begin{thm}[Modulus lemma]\label{moduluslemma}
For every $\Omega$-invariant formula $\psi(n,\omega)$,
\[
(\forall n\in \N)(\exists m_{0}\in \N)(\forall m,m'\in \sN)[m,m'\geq m_{0}\di \psi(n,m)\asa \psi(n,m')].
\]
For every $\Omega$-invariant function $f(n,\omega)$, we have
\[ 
(\forall n\in \N)(\exists m_{0}\in \N)(\forall m,m'\in \sN)[m,m'\geq m_{0}\di f(n,m)= f(n,m')].
\]
In each case, the number $m_{0}$ is computed by an $\Omega$-invariant function.
\end{thm}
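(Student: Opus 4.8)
The plan is to give the argument for $\Omega$-invariant formulas; the case of an $\Omega$-invariant function $f$ is literally the same after replacing ``$\psi(n,m)\asa\psi(n,m')$'' by ``$f(n,m)=f(n,m')$'' throughout. Fix $n\in\N$ and an arbitrary $\omega\in\Omega$ and set
\[
g(n,\omega):=\text{the least }k\leq\omega\text{ such that }(\forall m,m'\in\sN)\bigl(k\leq m,m'\leq\omega\ \di\ \psi(n,m)\asa\psi(n,m')\bigr).
\]
The matrix of this formula has only bounded quantifiers, and $k=\omega$ obviously satisfies it, so the least such $k$ exists by $\Delta_{0}$-minimisation (the least-number principle for bounded formulas, with $n,\omega$ as parameters), which is available in $\NSA$. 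I claim that $m_{0}:=g(n,\omega)$ witnesses the theorem for this $n$ and that $g$ is an $\Omega$-invariant function.

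The key step is that $g(n,\omega)\in\N$. Applying the defining clause of $\Omega$-invariance twice and using transitivity of $\asa$, the formula $\psi(n,\cdot)$ has a constant truth value on all of $\Omega$. Suppose, towards a contradiction, that $k:=g(n,\omega)$ is infinite. Then $k-1$ is infinite as well, so $k-1,k\in\Omega$ and hence $\psi(n,k-1)\asa\psi(n,k)$ by the constancy just noted. By definition of $g$ the formula $\psi(n,\cdot)$ is constant on $[k,\omega]$, and adjoining $\psi(n,k-1)\asa\psi(n,k)$ shows it is in fact constant on $[k-1,\omega]$, contradicting the minimality of $g(n,\omega)=k$. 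Hence $g(n,\omega)\in\N$. This underspill-style step — where $\Omega$-invariance is used precisely to exclude an infinite minimal witness — is where the actual content lies; everything else is bookkeeping.

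Next I would verify the modulus property. For $m,m'\in\sN$ with $m,m'\geq g(n,\omega)$ I claim $\psi(n,m)\asa\psi(n,g(n,\omega))$, which gives $\psi(n,m)\asa\psi(n,m')$ by transitivity. If $m\leq\omega$ then $m,g(n,\omega)\in[g(n,\omega),\omega]$, so the claim is the definition of $g$. If $m>\omega$, then $m$ is infinite, so $\psi(n,m)\asa\psi(n,\omega)$ by constancy of $\psi(n,\cdot)$ on $\Omega$, while $\psi(n,\omega)\asa\psi(n,g(n,\omega))$ since $\omega,g(n,\omega)\in[g(n,\omega),\omega]$; combining yields the claim. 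In particular $\psi(n,\cdot)$ is constant on the tail $T_{\omega}:=\{m\in\sN:m\geq g(n,\omega)\}$.

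Finally, for the $\Omega$-invariance of $g$, take $\omega,\omega'\in\Omega$ and assume without loss of generality $g(n,\omega)\leq g(n,\omega')$. Since $\psi(n,\cdot)$ is constant on $T_{\omega}$ and $[g(n,\omega),\omega']\subseteq T_{\omega}$, it is constant on $[g(n,\omega),\omega']$; hence the minimality in the definition of $g(n,\omega')$ forces $g(n,\omega')\leq g(n,\omega)$, so $g(n,\omega)=g(n,\omega')$. As $g$ is plainly a total function $\N\times\N\to\N$ (for finite $\omega$ the least such $k\leq\omega$ still exists and is $\leq\omega$), this exhibits the modulus $m_{0}$ as a value of the $\Omega$-invariant function $g$, and the proof is complete. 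I expect the only delicate points to be keeping the bounds straight so that $\Delta_{0}$-minimisation genuinely applies inside $\NSA$, and the underspill argument of the second paragraph.
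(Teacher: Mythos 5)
Your proof is correct and follows essentially the route the paper gestures at with ``follows easily using underspill for $\Delta_{0}$-formulas'': the decrement-and-contradict-minimality step in your second paragraph is precisely a proof of the relevant instance of $\Delta_{0}$-underspill, and the explicit $\Delta_{0}$-minimisation is the natural way to also secure the final clause about an $\Omega$-invariant modulus function.
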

\begin{proof}
Follows easily using underspill for $\Delta_{0}$-formulas.
\end{proof}
The previous theorem is called `modulus lemma' as it bears a resemblance to the modulus lemma from Recursion Theory \cite[Lemma 3.2]{zweer}
Intuitively, our modulus lemma states that the properties of an $\Omega$-invariant object are already determined at some \emph{finite} number.
This observation suggests that the notion of $\Omega$-invariance models the notion of \emph{finite procedure}.

\medskip

Another way of interpreting $\Omega$-invariance is as follows: Central to any version of constructivism is that there are basic objects (e.g., the natural numbers) and there are certain basic operations on
these objects (e.g., recursive functions or constructive algorithms).  All other objects are non-basic (aka `non-constructive' or `ideal'), and are to be avoided, as they fall outside the constructive world.
It goes without saying that infinite numbers in $\sN$ are ideals objects \emph{par excellence}.
Nonetheless, our modulus lemma suggests that if an object does not depend on the \emph{choice} of ideal element in its definition, it is not ideal, but actually basic.
This is the idea behind $\Omega$-invariance: ideal objects can be basic if their definition does not really depend on the choice of any particular ideal element.  In this way, $\Omega$-invariance approaches
the notion of finite procedure \emph{from above}, while the usual methods work \emph{from the ground up} by defining a set of basic constructive operations and a method for combining/iterating these.

\medskip

Finally, note that if $(\forall n\in\N)\varphi(n)$ satisfies \eqref{frik}, then the formula $\psi(\omega)\equiv(\forall n\leq \omega)\varphi(n)$ is $\Omega$-invariant.
Moreover, we have
\be\label{cccc}
\psi(\omega)\di (\forall n\in\N)\varphi(n) \wedge \neg\psi(\omega) \di (\exists n\in\N)\neg\varphi(n).
\ee
Intuitively, the previous formula suggests that there is an $\Omega$-invariant decision procedure for $(\forall n\in\N)\varphi(n)$.
Note that $(\forall n\in\N)\varphi(n)$ can be replaced by $(\forall n\in\sN)\varphi(n)$ in \eqref{cccc}.  This naturally leads to the definition of $\T$ below in Defintion \ref{bigT}.
\section{The base system $\NSA$}\label{owgiga}
In this section, we introduce the base theory $\NSA$ in which we shall work.  The system $\NSA$ is inspired by the theories ns-$\WKL_{0}$, $^{*}\RCA_{0}$, $^{*}\RCA_{0}'$, and $^{*}\WKL_{0}$, introduced by Keita Yokoyama and Jerry Keisler \cite{kei1,kei2,kei3,kei4, keisler1,keisler2} to apply techniques of Nonstandard Analysis in Reverse Mathematics.  Like Yokoyama's system ns-$\WKL_{0}$, we consider more than two sorts, corresponding to the standard number variables,
the nonstandard number variables, the standard set variables, and the nonstandard set variables.  The standard number and set variables are intended to range over the natural numbers, and subsets thereof.
The nonstandard number and set variables are intended to range over a nonstandard extension of the natural numbers, and subsets thereof.
Finally, there is an embedding from the standard structure into the nonstandard structure, and the usual `star morphism' $^{*}$ from Nonstandard analysis behaves as expected.  

\medskip

We shall denote the set of standard (or finite) numbers by $\N$, and the set of nonstandard numbers by $\sN$.  The set $\Omega$ is $\sN\setminus \N$, the set of infinite numbers.
For brevity, we use more informal notation than Yokoyama and Keisler.
\bdefi[s-$\NSA$]\label{defP}~
\begin{enumerate}
\item (Induction) The axioms $I\Sigma_{1}$ and $^{*}I\Sigma_{1}$.
\item (End Extension) The set $\sN$ is an end extension of $\N$.
\item {(Embedding)} The standard structure is embedded in the nonstandard structure via an injective homomorphism.
\item\label{oca} ($\Omega$-CA): For all $\psi(n,m)\in \Delta_{0}$ and $\omega\in\Omega$,
\[
(\forall n\in\N)(\forall \omega' \in \Omega)[^{*}\psi(n,\omega)\asa {^{*}\psi}(n,\omega')]\di (\exists X\subset \N)(\forall n\in\N)[^{*}\psi(n,\omega)\asa n\in X].
\]
\item\label{trakje} ($\Delta_{0}$-TRANS): $(\forall n\in \N)(\forall X\subset \N)[\varphi(n,X) \asa {^{*}\varphi(n,X)}]$, for $\varphi\in \Delta_{0}$.
\item ($\Delta_{0}$-underspill): $(\forall n\in \N)(\forall X\subset \sN)[(\forall \omega\in \Omega){^{*}\varphi}(n,X,\omega)\di (\exists m\in \N)\varphi(n,X,m)]$.  \label{bukaka}
\end{enumerate}
\edefi
Here, item \ref{oca} expresses that certain nonstandard objects (the $\Omega$-invariant formulas) may be used to define standard sets.
Similar principles are usually called \emph{standard part principles} \cite{keisler1}.
Also, item \ref{trakje} guarantees that functions defined on $\N$ (like $2^{n}$), are also defined on $\sN$.  We shall sometimes be sloppy concerning the use of the star morphism `$^{*}$'.
For instance, we just write $2^{\omega}$ instead of $^{*}2^{\omega}$, if $\omega\in\Omega$.

\medskip

In light of our modulus lemma, the sets definable via comprehension in s-$\NSA$ are those for which there is a finite procedure to decide elementhood.
In practice, item \ref{bukaka} (underspill) in Definition \ref{defP} is never used and may thus be omitted to ensure that not all $\Delta_{1}$-formulas are decidable.
Finally, we assume that s-$\NSA$ satisfies the following \emph{transfer rule}: If s-$\NSA$ proves $(\forall n \in\N)\varphi(n)$, for $\varphi \in \Delta_{0}$, then s-$\NSA$ proves $(\forall n\in\sN){^{*}}\varphi(n)$.

\medskip

As it turns out, s-$\NSA$ does not suffice for our purposes.  In particular, the binary distinction between finite and infinite numbers is `too coarse'.  We define $\NSA$ in the same way as s-$\NSA$,
but with at least one explicit extension $\N_{1}$ such that $\N\subsetneq \N_{1} \subsetneq \sN$.
The new extension $\N_{1}$ is connected to the other levels by $\Delta_{0}$-transfer, in the same way as $\N$ and $\sN$ are, and underspill also holds between levels.
The elements of $\N_{1}$ are called `$1$-finite' and the set $\Omega_{1}=\sN \setminus \N_{1}$ consists of the `$1$-infinite' numbers.  Intuitively, the set $\sN$ is \emph{stratified} into two levels of infinity.
Indeed, the elements of $\N_{1}\setminus \N$ are \emph{infinite}, but they are \emph{$1$-finite}, and the elements of $\Omega_{1}$ are infinite \emph{and $1$-infinite}.
The theory $\NSA$ is an example of \emph{stratified} or \emph{relative} Nonstandard Analysis \cite{peraire1,aveirohrbacek,hrbacek2,aloneatlast1,aloneatlast2}.
In case there are countably many $\N_{k}$ such that
\[
\N=\N_{0}\subsetneq \N_{1}\subsetneq \N_{2}\subsetneq \dots \N_{k}\subsetneq \N_{k+1} \subsetneq\dots \subsetneq {\sN},
\]
then for $k=0$, the index `$0$' is usually omitted, i.e., $\N=\N_{0}$ consists of the finite numbers and $\Omega=\sN \setminus \N$ are the infinite numbers.

\medskip

Finally, we consider the following Transfer Principle, not available in $\NSA$, but needed below.
\begin{princ}[$\paai$]
For all $\varphi$ in $\Delta_{0}$, we have
\be\label{xyz}
(\forall n\in\N)\varphi(n) \di (\forall n\in \sN)^{*}\varphi(n).
\ee
\end{princ}
We always assume that $\Delta_{0}$-formulas may contain standard sets and numbers as parameters.
\section{Reverse-engineering Constructive Reverse Mathematics}
\emph{Constructive Reverse  Mathematics} is a spin-off from Harvey Friedman's famous \emph{Reverse Mathematics} program \cite{fried,fried2} where the base theory is (based on) BISH.
In this section, we define the new hyperconnectives and show that many of the well-known principles from Constructive Reverse Mathematics (e.g., LPO and MP) satisfy
the same equivalences as their counterparts inside $\NSA$ (e.g., $\LPOO$ and $\MPO$).  The latter are obtained from the former by simply replacing the usual connectives
by the new hyperconnectives as in Example \ref{firstexa}.
%
%
\bdefi[Hyperconnectives]\label{hypercon} For formulas $A$ and $B$,
\begin{enumerate}
\item $A\DI B$ is defined as $[A \wedge A\in \T] \di[ B \wedge B\in \T]$, 
\item ${\sim}A$ is defined as $A\DI 0=1$,
\item $A\HD B$ is defined as \emph{There is an $\Omega$-invariant formula $\psi(\vx,\omega)$ such that}
\be\label{jokkk}
(\forall \vx\in \N^{k})\big(\psi(\vx,\omega)\di [A(\vx)\wedge A(\vx)\in \T] \wedge \neg\psi(\vx,\omega)\di [B(\vx)\wedge B(\vx)\in \T]\big),
\ee
for formulas $A,B$ involving parameters $\vx\in \N^{k}$.
\end{enumerate}
\edefi
The third connective is called `hyperdisjunction', the second and third one are called `hyperimplication' and `hypernegation'.
Obviously, this definition hinges on the definition of $\T$.  Intuitively, for suitable $A$, the formula `$A\in \T$' expresses that $A$ satisfies the transfer principle of Nonstandard Analysis.
For brevity, we give the definition of $\T$ up to $\Sigma_{2}\cup\Pi_{2}$-formulas.  The definition for higher complexity is analogous, straightforward, and clear from the following definition.
\bdefi[The set $\T$]\label{bigT} For $\varphi\in\Delta_{0}$, we define
\begin{enumerate}
\item The formula $(\forall n\in \N)\varphi(n) \in \T$ is $(\forall n\in \N)\varphi(n) \di (\forall n\in \sN)\varphi(n)$.
\item The formula $(\exists n\in \sN)\varphi(n) \in \T$ is $(\exists n\in \sN)\varphi(n) \di (\exists n\in \N_{1})\varphi(n)$.
\item The formula $(\forall n\in \N)(\exists m\in \N)\varphi(n,m) \in \T$ is
\[
(\forall n\in \N)(\exists m\in\N)\varphi(n,m) \di (\forall n\in \sN)(\exists m\in \sN)[\text{$m$ is $n$-finite} \wedge \varphi(n,m)].
\]
\item The formula $(\exists n\in \sN)(\forall m\in\sN)\varphi(n,m) \in \T$ is
\[
(\exists n\in \sN)(\forall m\in \sN)\varphi(n,m) \di (\exists n\in \N_{1})(\forall m\in\sN)\varphi(n,m).
\]
\item For formulas $A$ not suitable for transfer, like $(\exists n\in \N)\varphi(n)$, $(\exists n \in \N_{1})\varphi(n)$, $0=1$, and $(\forall n\in \sN)\varphi(n)$, the formula $A\in \T$ is just $0=0$.  In this case, $A\in \T$ is called `trivial'.
\end{enumerate}
\edefi
Note that universal formulas in $\T$ `maximally' satisfy the transfer principle\footnote{E.g.\ we have $(\forall n\in \N)\varphi(n) \di (\forall n\in \sN)\varphi(n)$, and not $(\forall n\in \N)\varphi(n) \di (\forall n\in \N_{1})\varphi(n)$.}, but existential formulas only satisfy the transfer principle \emph{maximally minus one level}.
With the above definitions, it is clear that $\DI$, $\HD$ and ${\sim}$ follow the well-known BHK interpretation of intuitionistic logic \cite{buss}, but with provability and algorithm replaced by Transfer and $\Omega$-invariance, respectively.
\begin{rem}\rm
An alternative definition of hyperdisjunction is the following:  
\[
A\mathcal{V}B \equiv\big( [A\vee B] \wedge[ A\di A\in \T] \wedge [B\di B\in \T]\big).
\]
For $\mathcal{V}$ instead of $\HD$, we would obtain results similar to the ones below, except that $\LPOO$, $\LLPOO$, $\WLPOO$, and $\MPO$ would all be equivalent to $\paai$.
Intuitively, this corresponds to the extreme intuitionistic view that $\LPO$, $\LLPO$, $\WLPO$ and $\MP$ are all equally `false' principles.
\end{rem}
\begin{rem}\rm The definition of $\HD$ is based on the observation that $\paai$ is a kind of `hyperexcluded middle': it excludes the possibility that $(\forall n\in \N)\varphi(n)\wedge (\exists n\in\sN)\neg\varphi(n)$.
This informal observation is formalized in Example \ref{firstexa} and Theorem \ref{firstlast2}.  Moreover, given $\paai$, $(\forall n\in \N)\varphi(n)$ is equivalent to
$(\forall n\leq \omega)\varphi(n)$, for any choice of $\omega\in\Omega$, as noted in \cite{tale,aloneatlast3}.  Hence, an $\Omega$-invariant formula can decide if $(\forall n\in \N)\varphi(n)$ is true or not, suggesting \eqref{jokkk}.  Furthermore, under certain conditions, if a system of Nonstandard Analysis $T^{ns}$
proves a standard universal formula $(\forall n\in \N)\varphi(n)$, then $T^{ns}$ also proves $(\forall n\in \sN)\varphi(n)$, i.e., we have $(\forall n\in \N)\varphi(n)\in\T$.
In light of the BHK-interpretation of implication, the previous observation motivates the definition of $\T$ and $\DI$.
\end{rem}
\subsection{Limited Omniscience}
In this paragraph, we consider LPO, the \emph{Limited Principle of Omniscience}, which is the Law of Excluded Middle (LEM), limited to $\Sigma_{1}$-formulas.
This principle is considered to be the original sin of classical mathematics according to constructivist canon \cite{vajuju, bish1}.
\begin{princ}[$\LPOO$]
For all $P\in \Sigma_{1}$, we have $P\HD {\sim}P$.
\end{princ}
\begin{thm}\label{firstlast2}
In $\NSA$, $\LPOO$ is equivalent to $\paai$.
\end{thm}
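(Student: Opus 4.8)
The plan is to prove the two implications separately, after unwinding Definitions~\ref{bigT} and~\ref{hypercon} for $\Sigma_{1}$-formulas. Fix a $\Delta_{0}$-formula $\varphi$ and write $P(\vx)$ for the $\Sigma_{1}$-formula $(\exists m\in\N)\varphi(\vx,m)$. The first thing to record is that, for such a $P$, the bracketed terms of Definition~\ref{hypercon} collapse: by item~5 of Definition~\ref{bigT} the formula $P(\vx)\in\T$ is trivial, so $P(\vx)\wedge(P(\vx)\in\T)$ is equivalent to $P(\vx)$; hence ${\sim}P(\vx)\equiv P(\vx)\DI 0=1$ unfolds to $P(\vx)\di 0=1$ (note that $0=1\in\T$ is trivial too), i.e.\ to $\neg P(\vx)$, equivalently $(\forall m\in\N)\neg\varphi(\vx,m)$; and by item~1 of Definition~\ref{bigT} the formula ${\sim}P(\vx)\in\T$ is literally the $\paai$-instance $(\forall m\in\N)\neg\varphi(\vx,m)\di(\forall m\in\sN)\neg\varphi(\vx,m)$. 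Thus $P\HD{\sim}P$ amounts to: there is an $\Omega$-invariant formula $\psi(\vx,\omega)$ with $\psi(\vx,\omega)\di P(\vx)$ and $\neg\psi(\vx,\omega)\di\big[\neg P(\vx)\wedge\big((\forall m\in\N)\neg\varphi(\vx,m)\di(\forall m\in\sN)\neg\varphi(\vx,m)\big)\big]$, uniformly in $\vx\in\N^{k}$.

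For $\paai\Rightarrow\LPOO$, I would take as witness the $\Delta_{0}$-formula $\psi(\vx,m)$ given by $(\exists k\le m)\varphi(\vx,k)$ and substitute a fixed $\omega\in\Omega$. Using $\paai$ in its contrapositive ($\Sigma_{1}$-transfer) form $(\exists k\in\sN)\varphi(\vx,k)\di(\exists k\in\N)\varphi(\vx,k)$, together with the fact that $\sN$ end-extends $\N$ so that every finite witness lies below every infinite number, one proves $\psi(\vx,\omega)\asa P(\vx)$. Since the right-hand side does not mention $\omega$, this both yields \eqref{omegainv}, i.e.\ $\Omega$-invariance of $\psi(\vx,\omega)$, and gives $\psi(\vx,\omega)\di P(\vx)$ and $\neg\psi(\vx,\omega)\di\neg P(\vx)$; the remaining conjunct is an instance of $\paai$ and so holds outright. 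This establishes $P\HD{\sim}P$ for every $P\in\Sigma_{1}$, that is, $\LPOO$.

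For $\LPOO\Rightarrow\paai$, fix a $\Delta_{0}$-formula $\varphi$; the goal is $(\forall n\in\N)\varphi(n)\di(\forall n\in\sN)\varphi(n)$. I would apply $\LPOO$ to the $\Sigma_{1}$-formula $P\equiv(\exists n\in\N)\neg\varphi(n)$ and use the collapse above, obtaining an $\Omega$-invariant $\psi(\omega)$ with $\psi(\omega)\di P$ and $\neg\psi(\omega)\di\big[\neg P\wedge\big((\forall n\in\N)\varphi(n)\di(\forall n\in\sN)\varphi(n)\big)\big]$. Then I would argue by the classically available case distinction on $\psi(\omega)$: if $\psi(\omega)$ holds then $P$ holds, so $(\forall n\in\N)\varphi(n)$ is false and the desired implication holds vacuously; if $\neg\psi(\omega)$ holds then $\neg P$ yields $(\forall n\in\N)\varphi(n)$, and the second conjunct then delivers $(\forall n\in\sN)\varphi(n)$. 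In either case $(\forall n\in\N)\varphi(n)\di(\forall n\in\sN)\varphi(n)$ holds, and since $\varphi$ was arbitrary, $\paai$ follows.

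I expect the main obstacle to be bookkeeping rather than real difficulty: one must read off correctly from Definition~\ref{bigT} that, for $\Sigma_{1}$-formulas $P$, the condition ${\sim}P\in\T$ is exactly the $\Pi_{1}$-transfer instance one is after, which is what makes the two sides match. The only step that genuinely uses the strength of $\paai$ is the verification that $(\exists k\le m)\varphi(\vx,k)$ is $\Omega$-invariant in the first implication, where an infinite witness below $\omega$ is pushed down to a finite one; this mirrors the fact (noted in the Remark on $\HD$ above) that under $\paai$ the formula $(\forall n\in\N)\varphi(n)$ is equivalent to $(\forall n\le\omega)\varphi(n)$ for every $\omega\in\Omega$. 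Note finally that neither the comprehension axiom $\Omega$-CA nor underspill from Definition~\ref{defP} is needed.
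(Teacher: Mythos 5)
Your proof is correct and follows essentially the same route as the paper's: you unfold the hyperconnectives for $\Sigma_{1}$-formulas to read off that ${\sim}P\in\T$ is exactly a $\Pi_{1}$-transfer instance, take $\psi(\vx,\omega)\equiv(\exists k\leq\omega)\varphi(\vx,k)$ as the $\Omega$-invariant witness for $\paai\Rightarrow\LPOO$, and do a case split on $\psi(\omega)$ for the converse. The only cosmetic difference is that you apply $\LPOO$ to $(\exists n\in\N)\neg\varphi(n)$ and obtain the $\Pi_{1}$-form of transfer directly, whereas the paper applies it to $(\exists n\in\N)\varphi(n,\vx)$ and extracts the contrapositive $\Sigma_{1}$-form.
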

\begin{proof}
Assume $\LPOO$ and let $\varphi(n,\vx)$ be as in $\paai$ with all parameters $\vx\in \N^{k}$ shown.
Then we have
\be\label{kerkje2}
(\exists n\in \N)\varphi(n,\vx) \HD {\sim}[(\exists n\in \N)\varphi(n,\vx)].
\ee
By definition, ${\sim}[(\exists n\in \N)\varphi(n,\vx)]$ is $(\exists n\in \N)\varphi(n,\vx)\DI 0=1$, which is
\be\label{ella}
\big[(\exists n\in \N)\varphi(n,\vx) \wedge[ (\exists n\in \N)\varphi(n,\vx)\in \T] \big]\di \big[0=1 \wedge[ 0=1\in \T]\big],
\ee
again by definition.
As neither $(\exists n\in \N)\varphi(n,\vx)$ nor $0=1$  is suitable for transfer, \eqref{ella} reduces to $(\exists n\in \N)\varphi(n,\vx)\di 0=1$, i.e., $(\forall n\in\N)\neg\varphi(n,\vx)$.
Indeed, by definition, the formulas $(\exists n\in \N)\varphi(n,\vx)\in \T$ and $0=1\in \T$ are $0=0$, i.e., trivial.   
Again by definition, \eqref{kerkje2} implies the existence of some $\Omega$-invariant $\psi(\vx,\omega)$ such that, for all $\vx\in \N^{k}$,
\[
\psi(\vx,\omega)\di (\exists n\in \N)\varphi(n,\vx)
\]
\be\label{mini2}
\wedge
\ee
\[
  \neg\psi(\vx,\omega) \di \big[(\forall n\in \N)\neg\varphi(n,\vx)  
\wedge
[ (\forall n\in \N)\neg\varphi(n,\vx)  \di (\forall n\in \sN)\neg\varphi(n,\vx)] \big].
\]
Note that the innermost square-bracketed subformula is $(\forall n\in \N)\neg\varphi(n)\in \T$.
Now suppose $(\exists n\in \sN)\varphi(n,\vx_{0})$ for some fixed $\vx_{0}\in\N^{k}$, and consider \eqref{mini2}.  Then $\neg\psi(\vx_{0},\omega)$ would imply $(\forall n\in \sN)\neg\varphi(n,\vx_{0})$, which is impossible.  However, $\psi(\vx_{0},\omega)$ implies $(\exists n\in \N)\varphi(n,\vx_{0})$, and we have derived $(\exists n\in \sN)\varphi(n,\vx_{0})\di (\exists n\in \N)\varphi(n,\vx_{0})$,
an instance of $\paai$, using \eqref{mini2}, i.e., $\LPOO$.  Hence $\LPOO$ implies $\paai$ and this case is done.

\medskip

For the reverse direction, assume $\paai$ and let $\varphi(n,\vx)$ be as in $\LPOO$.  By $\paai$, $(\exists n\in \N)\varphi(n,\vx)$ is equivalent to $(\exists n\leq \omega)\varphi(n,\vx)$, for any $\omega\in \Omega$ and $\vx\in \N^{k}$.  In other words, the formula $\psi(\vx,\omega)\equiv (\exists n\leq \omega)\varphi(n,\vx)$ is $\Omega$-invariant and satisfies \eqref{mini2}.  As the latter is equivalent to \eqref{kerkje2}, an instance of $\LPOO$, we obtain the latter principle.
\end{proof}
To consider principles regarding real numbers, we first need to define these objects.  We use the usual definition involving fast-converging sequences from \cite{simpson2}.
\begin{defi}[Real number]\label{keepinitreal2} \rm
A real number is a sequence of rationals $(q_{n})$ s.t.\ $(\forall k,i\in\N)(|q_{k}-q_{k+i}|< \frac{1}{2^{k}})$.
Two reals $x=(q_{n})$ and $y=(r_{n})$ are equal, i.e., $x=y$, if $(\forall k\in \N)(|q_{k}-r_{k}|\leq \frac{1}{2^{k-1}})$, $x<y$ is $(\exists k\in \N)(q_{k}+\frac{1}{2^{k}}<r_{k})$, and $x\geq y$ is $\neg(x<y)$.
\end{defi}
\bnota
The reals $x$ and $y$ are always assumed to be given as $x=(q_{n})$ and $y=(r_{n})$, for sequences $q_{n}, r_{n}:\N \di \Q$.
\enota
We now consider the following principle.
\begin{princ}[$\mathbb{LPR}$]
$(\forall x\in \R)(x>0 \HD {\sim}(x>0))$.
\end{princ}
We have the following theorem.
\begin{thm}\label{uni2}
In $\NSA$, $\LPOO$ is equivalent to $\mathbb{LPR}$.
\end{thm}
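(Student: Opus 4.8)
The plan is to prove the two implications separately, routing the harder direction through $\paai$ by means of Theorem~\ref{firstlast2}, which already gives $\NSA\vdash\LPOO\asa\paai$. The direction $\LPOO\di\mathbb{LPR}$ is essentially immediate. For any real $x=(q_{n})$, Definition~\ref{keepinitreal2} makes ``$x>0$'' the $\Sigma_{1}$-formula $(\exists k\in\N)(q_{k}>\frac{1}{2^{k}})$, whose matrix is $\Delta_{0}$. Instantiating $\LPOO$ at $P\equiv(x>0)$ gives $P\HD{\sim}P$; unwinding the definitions, $P\in\T$ is trivial, ${\sim}P$ is $(\forall k\in\N)(q_{k}\le\frac{1}{2^{k}})$, and $({\sim}P)\in\T$ is $(\forall k\in\N)(q_{k}\le\frac{1}{2^{k}})\di(\forall k\in\sN)(q_{k}\le\frac{1}{2^{k}})$, so that $P\HD{\sim}P$ is verbatim the instance of $\mathbb{LPR}$ for $x$. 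Since $x$ was arbitrary, $\LPOO\di\mathbb{LPR}$.

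For the converse, by Theorem~\ref{firstlast2} it suffices to derive $\paai$ from $\mathbb{LPR}$. Fix $\varphi\in\Delta_{0}$ and put $q_{n}:=\frac{1}{2^{m}}$ if $m=\min\{\,j\le n:\neg\varphi(j)\,\}$ exists, and $q_{n}:=0$ otherwise. Using least-number for $\Sigma_{1}$ (available from $I\Sigma_{1}$), one checks that $x:=(q_{n})$ is a genuine real number: once the least witness $\mu$ to $\neg\varphi$ appears, the sequence is constantly $2^{-\mu}$ from index $\mu$ on, and $2^{-\mu}\le 2^{-(k+1)}$ whenever $k<\mu$, so the Cauchy condition holds; moreover, provably $\neg(x>0)\asa(\forall n\in\N)\varphi(n)$. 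The key extra point is the nonstandard content: by $\Delta_{0}$-transfer the same recursion defines $q$ on $\sN$, and $(\forall k\in\sN)(q_{k}\le\frac{1}{2^{k}})\asa(\forall n\in\sN)\varphi(n)$ — the nontrivial direction being that if $\neg\varphi(n_{0})$ holds for some $n_{0}\in\sN$ and $j_{0}$ is the least such, then $q_{j_{0}+1}=2^{-j_{0}}>2^{-(j_{0}+1)}$, contradicting the left-hand side.

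Now apply $\mathbb{LPR}$ to this $x$: we obtain an $\Omega$-invariant $\psi(\omega)$ with $\psi(\omega)\di[(x>0)\wedge(x>0)\in\T]$ and $\neg\psi(\omega)\di[{\sim}(x>0)\wedge({\sim}(x>0))\in\T]$. Since $(x>0)\in\T$ is trivial, the first clause yields $\psi(\omega)\di(x>0)$; and ${\sim}(x>0)$ is $(\forall k\in\N)(q_{k}\le\frac{1}{2^{k}})$ while $({\sim}(x>0))\in\T$ is $(\forall k\in\N)(q_{k}\le\frac{1}{2^{k}})\di(\forall k\in\sN)(q_{k}\le\frac{1}{2^{k}})$. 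Suppose $(\forall n\in\N)\varphi(n)$; then $\neg(x>0)$, hence $\neg\psi(\omega)$, hence ${\sim}(x>0)$ together with $({\sim}(x>0))\in\T$ hold, whence $(\forall k\in\sN)(q_{k}\le\frac{1}{2^{k}})$ and therefore $(\forall n\in\sN)\varphi(n)$. This is the instance of $\paai$ for $\varphi$; as $\varphi$ was arbitrary, $\paai$ follows, and with Theorem~\ref{firstlast2} so does $\LPOO$.

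I expect the main obstacle to be the nonstandard bookkeeping rather than the (routine) real-number construction: one must check that the transfer clause $({\sim}(x>0))\in\T$ is exactly the assertion that ``$x\le 0$'' persists to all hypernatural indices, that the extension of $q$ to $\sN$ via $\Delta_{0}$-transfer obeys the same recursion, and that this persistence coincides with $(\forall n\in\sN)\varphi(n)$ — which is precisely where the $+1$ shift in the definition of $q_{n}$ earns its keep. A secondary chore is matching the $\Omega$-invariant formula handed back by $\mathbb{LPR}$, and the various trivial/non-trivial ``$\in\T$'' clauses, against the shape demanded by the definition of $\HD$ in \eqref{jokkk}.
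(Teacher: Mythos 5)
Your proposal is correct and follows essentially the paper's route: the converse is proved by constructing a real $x$ encoding the $\Delta_{0}$-formula $\varphi$ so that the transfer clause $({\sim}(x>0))\in\T$ delivers the desired instance of $\paai$, and then Theorem~\ref{firstlast2} closes the loop, while the forward direction is handled via $\paai$. Two minor variations worth flagging: your $q_{n}$ jumps once to $2^{-\mu}$ where $\mu$ is the least witness of $\neg\varphi$, whereas the paper uses the cumulative sum $q_{n}=\sum_{i\le n}2^{-i}T_{\psi}(i)$ with $\psi(i)\equiv(\exists m\le i)\varphi(m)$ and applies $\paai$ to $\neg\varphi$ rather than $\varphi$ --- the two encodings are equivalent and exploit the same index shift ($q_{j_{0}+1}>2^{-(j_{0}+1)}$) that you correctly identify as doing the work; and in the forward direction the paper is a bit more careful, observing via $\Pi_{1}$-transfer that ``$x\in\R$'' (a $\Pi_{1}$-formula) lies in $\T$ so as to discharge the hyperimplication reading of the outer $(\forall x\in\R)$, a bookkeeping step your ``essentially immediate'' instantiation leaves implicit but which is available once $\paai$ is in hand.
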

\begin{proof}
For the forward direction, we may assume $\paai$, by Theorem \ref{firstlast2}.
By definition, $x>0$ is $(\exists n\in \N)(q_{n}>\frac{1}{2^{n}})$, and ${\sim}{(x>0)}$ is $(\forall n\in \N)(q_{n}\leq\frac{1}{2^{n}})$.
In the same way as in the proof of Theorem \ref{firstlast2}, the formula $\psi(\omega)\equiv(\exists n\leq \omega)(q_{n}>\frac{1}{2^{n}})$ is equivalent to $x>0$, by $\paai$.
Note that $q_{n}$, and hence $\psi$, may depend on standard parameters $\vx\in\N^{k}$ which are not shown.  By the above, we have
\[
\psi(\omega)\di (x>0) \wedge \neg\psi(\omega)\di {\sim}(x>0).
\]
By definition, $x>0 \in \T$ is trivial and, by $\paai$, we have ${\sim}(x>0)\in \T$, as the latter is just $(\forall n\in \N)(q_{n}\leq\frac{1}{2^{n}})\di (\forall n\in \sN)(q_{n}\leq\frac{1}{2^{n}})$.
Hence, $x>0 \HD {\sim}(x>0)$ follows immediately.  Finally, for the formula $(\forall x)(x\in \R \DI x>0 \HD {\sim}(x>0))$, simply apply $\Pi_{1}$-transfer to `$x\in \R$', which is $\Pi_{1}$, to
observe that the latter is in $\T$.

\medskip

For the reverse direction, assume $\LPRO$, and let $\varphi$ be as in $\paai$.
Now define\footnote{For $\varphi\in\Delta_{0}$, there is a function $T_{\varphi}$ which is $0$ or $1$ depending on whether $\varphi$ is false or true.} the real $x$ as $q_{n}:=\sum_{i=0}^{n}\frac{1}{2^{i}}T_{\psi}(i)$, for $\psi(i)$ defined as $(\exists m\leq i)\varphi(m)$.
Note that $x$ indeed satisfies the definition of real number from Definition \ref{keepinitreal2}, as we have
\[\textstyle
(\forall n,m\in\N)\big(m<n \di |q_{m}-q_{n}|\leq \sum_{i=m+1}^{n}\frac{1}{2^{i}} <\frac{1}{2^{m}} \big).
\]
Note that the previous formula is also valid for $\sN$ instead of $\N$.  In other words, `$x\in\R$' is in $\T$.

\medskip

By $\LPRO$, we either have $x>0$, i.e., $(\exists n\in \N)(q_{n}>\frac{1}{2^{n}})$, or ${\sim}(x>0)$, i.e., $ (\forall n\in \N)(q_{n}\leq \frac{1}{2^{n}})$, \emph{and} the fact that ${\sim}(x>0)$ is in $\T$.
The latter is
\be\label{friekske2}\textstyle
(\forall n\in \N)(q_{n}\leq \frac{1}{2^{n}})\di (\forall n\in \sN)(q_{n}\leq \frac{1}{2^{n}}).
\ee
Thus, if $(\forall n\in \N)\neg\varphi(n)$, then $q_{n}=0$, for $n\in \N$, and also $(\forall n\in \N)(q_{n}\leq \frac{1}{2^{n}})$.
By \eqref{friekske2}, we have $(\forall n\in \sN)(q_{n}\leq \frac{1}{2^{n}})$, which implies that $(\forall n\in \sN)\neg\varphi(n)$.  Indeed, if $(\exists n\in \sN)\varphi(n)$, say we have $\varphi(n_{0})$, then
$q_{n}>q_{n_{0}}\geq \frac{1}{2^{n_{0}}}$, for all $n>n_{0}$.  Hence, we have $q_{n_{0}+1}>\frac{1}{2^{n_{0}+1}}$, which would contradict $(\forall n\in \sN)(q_{n}\leq \frac{1}{2^{n}})$.

\medskip

Thus, given $\LPRO$, the formula $(\forall n\in \N)\neg\varphi(n)$ implies $(\forall n\in \sN)\neg\varphi(n)$, which is an instance of $\Pi_{1}$-TRANS.  By Theorem \ref{firstlast2}, $\LPOO$ follows and we are done.
\end{proof}
\begin{cor}
In $\NSA$, we have both $\LPOO\asa \LPRO$ and $\LPOO\ASA \LPRO$.
\end{cor}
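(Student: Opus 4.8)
The plan is to treat the two assertions separately. The ordinary biconditional $\LPOO\asa\LPRO$ (here $\asa$ abbreviates $\leftrightarrow$) is exactly the content of Theorem~\ref{uni2}, so nothing new is required there; all the work lies in the hyper-biconditional $\LPOO\ASA\LPRO$, which I read as the conjunction of the two hyperimplications $\LPOO\DI\LPRO$ and $\LPRO\DI\LPOO$. Unfolding the definition of $\DI$ from Definition~\ref{hypercon}, the former is $[\LPOO\wedge\LPOO\in\T]\di[\LPRO\wedge\LPRO\in\T]$ and the latter is $[\LPRO\wedge\LPRO\in\T]\di[\LPOO\wedge\LPOO\in\T]$.

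To establish $\LPOO\DI\LPRO$, assume the antecedent; in particular $\LPOO$ holds, so $\paai$ holds by Theorem~\ref{firstlast2}. With $\paai$ in hand, the forward direction of the proof of Theorem~\ref{uni2} delivers $\LPRO$; moreover its closing move — applying $\Pi_1$-transfer to the $\Pi_1$ guard `$x\in\R$' inside $(\forall x)(x\in\R\DI x>0\HD{\sim}(x>0))$ — is precisely what is needed to see $\LPRO\in\T$ as well. Hence $\LPRO\wedge\LPRO\in\T$ and the first hyperimplication holds. For the converse $\LPRO\DI\LPOO$, assume the antecedent; in particular $\LPRO$ holds, and the reverse direction of the proof of Theorem~\ref{uni2} extracts $\paai$ from it, whence $\LPOO$ by Theorem~\ref{firstlast2}. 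It remains to note $\LPOO\in\T$: this is either trivial, classifying the omniscience schema $\LPOO$ among the formulas not suitable for transfer (cf.\ the last clause of Definition~\ref{bigT}), or, under the analogous extension of $\T$ past the $\Sigma_2\cup\Pi_2$ fragment, it follows from $\paai$ by transferring the $\Delta_0$ coding guard `$P\in\Sigma_1$' exactly as `$x\in\R$' was transferred above. Either way $\LPOO\wedge\LPOO\in\T$, so the second hyperimplication holds, and combining the two gives $\LPOO\ASA\LPRO$.

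The main obstacle is not logical but notational: one must pin down what `$X\in\T$' means for the compound principles $X=\LPOO$ and $X=\LPRO$, which sit outside the $\Sigma_2\cup\Pi_2$ fragment where $\T$ was written out explicitly. For $\LPRO$ this is already resolved inside the proof of Theorem~\ref{uni2} (via the $\Pi_1$ guard `$x\in\R$'), and for $\LPOO$ it is either trivial or the same transfer-of-the-guard step; so no genuinely new argument is needed, and the corollary is a bookkeeping consequence of Theorems~\ref{firstlast2} and~\ref{uni2} together with the proof of the latter.
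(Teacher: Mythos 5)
The paper's proof of the hyper-equivalence is a single observation you only half make: both $\LPOO\in\T$ and $\LPRO\in\T$ are \emph{trivial}, i.e.\ just $0=0$, because neither $\LPOO$ nor $\LPRO$ is (equivalent to) a standard formula and so both fall under the final clause of Definition~\ref{bigT}. Once that is noted, $\LPOO\DI\LPRO$ literally unfolds to $[\LPOO\wedge 0=0]\di[\LPRO\wedge 0=0]$, which is $\LPOO\di\LPRO$; similarly for the converse; so $\LPOO\ASA\LPRO$ \emph{collapses} to $\LPOO\asa\LPRO$, already given by Theorem~\ref{uni2}. The justification the paper gives for the triviality is the metatheoretic remark that if $\LPOO$ or $\LPRO$ \emph{were} equivalent to a standard formula suitable for transfer, then $\NSA$ itself would prove $\paai$, which it does not.

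Your proposal identifies the right decomposition (unfold $\ASA$, invoke Theorem~\ref{uni2}) and for $\LPOO\in\T$ you do list ``trivial'' as an option, but you never commit to it and you do not make the corresponding observation for $\LPRO\in\T$ at all. Instead, for $\LPRO\in\T$ you offer only the reading that the closing step of Theorem~\ref{uni2}'s proof --- applying $\Pi_1$-transfer to ``$x\in\R$'' --- ``is precisely what is needed to see $\LPRO\in\T$.'' That is a misreading: in that step the formula being shown to lie in $\T$ is the $\Pi_1$ guard ``$x\in\R$'' itself (which is required for the antecedent of the outer $\DI$), not the compound principle $\LPRO$. Likewise the fallback you suggest for $\LPOO\in\T$ --- ``transferring the $\Delta_0$ coding guard $P\in\Sigma_1$'' --- is not a step of the theory: ``$P\in\Sigma_1$'' is a syntactic side-condition of the schema, not an arithmetical formula to which transfer applies. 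So while your argument's conclusion happens to be correct (because the $\T$-membership claims really are $0=0$), the two concrete justifications you offer for the $\T$-membership of the compound principles are unsound, and the clean one-line reduction the corollary actually turns on is left as an unconfirmed ``either\dots or''. Make the triviality of $\LPOO\in\T$ and $\LPRO\in\T$ explicit (via the last clause of Definition~\ref{bigT}, plus the observation that otherwise $\NSA\vdash\paai$), and the entire hyper-equivalence becomes the one-line reduction to $\asa$ that the paper records.
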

\begin{proof}
The first equivalence was proved in the theorem.  For the second (hyper)equivalence, we note that $\LPOO \ASA \LPRO$ reduces to $\LPOO\asa \LPRO$, as both $\LPOO$ and $\LPRO$ are not (equivalent to) standard formulas.
Indeed, otherwise $\NSA$ would include $\paai$, which is not the case.
\end{proof}
We assume the usual (classical) definitions of sequence of reals, Cauchy sequence, convergent and increasing sequence.
\begin{princ}[$\MCTO$]
If $x_{n}$ is a bounded and increasing sequence of reals, then $x_{n}$ converges to a limit:
\[\textstyle
(\forall n\in \N)(x_{n}\leq x_{n+1})\wedge (\exists N\in \N)(\forall n\in \N)( x_{n} \leq N )
\]
\vspace{-0,7cm}
\be\label{BIG}
\rotatebox[origin=c]{270}{$\DI$}
\ee
\[\textstyle
(\exists x \in \R)(\forall k\in \N)(\exists M\in \N)(\forall m\in\N)(m\geq M\DI |x_{m}-x|\leq \frac{1}{k})
\]
\end{princ}

\begin{thm}
In $\NSA$, $\mathbb{MCT}$ is equivalent to $\LPOO$.
\end{thm}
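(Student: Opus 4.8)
The plan is to prove, over $\NSA$, that $\MCTO$ is equivalent to $\paai$; by Theorem~\ref{firstlast2} this also yields the equivalence with $\LPOO$. The two directions mirror the proof of Theorem~\ref{uni2}: one way is a nonstandard construction of the supremum gated by the comprehension axiom $\Omega$-CA of $\NSA$, the other feeds a suitable ``diagonal'' sequence into $\MCTO$.

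For $\paai\di\MCTO$, I assume $\paai$ and assume the antecedent $A$ of $\MCTO$ together with $A\in\T$; the latter is in fact a consequence of $\paai$, since $A$ is a conjunction of a $\Pi_{1}$- and a $\Sigma_{2}$-formula with standard parameters. So let $(x_{n})$ be a bounded increasing sequence of reals and fix $\omega\in\Omega$. By $\paai$ the standard bound on $(x_{n})$ still bounds $x_{n}$ for all $n\in\sN$, so $x_{\omega}$ is a limited hyperreal; moreover $x_{\omega}\approx x_{\omega'}$ for all $\omega,\omega'\in\Omega$, since otherwise there is a standard rational $q$ with $x_{\omega}<q<x_{\omega'}$ (say $\omega<\omega'$), and then $\Sigma_{1}$-transfer (which follows from $\paai$) applied to $(\exists n\in\sN)(x_{n}>q)$ produces a standard $n$ with $x_{n}>q$, contradicting $x_{\omega}<q$ and monotonicity. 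Hence the relevant comparison formulas tracking the cut of $x_{\omega}$ are $\Omega$-invariant, and $\Omega$-CA yields a standard real $x$ equal to the common standard part of the $x_{\omega}$. Standard facts about suprema of bounded increasing sequences (again using $\Sigma_{1}$-transfer to locate the approximating indices) give $x=\lim_{n}x_{n}$, so $B$ holds; and $B\in\T$ — the convergence transferred to $\sN$ in the stratified sense of Definition~\ref{bigT} — follows by one further application of $\paai$ together with $x_{\omega}\approx x$, exactly as in the closing paragraph of the forward direction of Theorem~\ref{uni2}. Thus $\MCTO$ holds.

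For $\MCTO\di\paai$, I derive the equivalent $\Sigma_{1}$-transfer statement $(\exists n\in\sN)\neg\varphi(n)\di(\exists n\in\N)\neg\varphi(n)$ for $\varphi\in\Delta_{0}$. Put $\chi(i)\equiv(\exists m\leq i)\neg\varphi(m)$, let $T_{\chi}$ be its $\{0,1\}$-valued characteristic function, and set $x_{n}:=\sum_{i=0}^{n}\frac{1}{2^{i}}T_{\chi}(i)$. Then $(x_{n})$ is an increasing sequence of dyadic rationals bounded by $2$; these are provable $\Delta_{0}$ facts, so they hold over $\N$ (giving $A$) and, by the transfer rule of $\NSA$, also over $\sN$ (giving $A\in\T$). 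By $\MCTO$ we obtain $B\wedge B\in\T$: a standard real $x$ which is the limit of $(x_{n})$ over $\N$ and, by the stratified convergence read off from $B\in\T$, also over $\sN$. Now suppose $\neg\varphi(n_{0})$ for some $n_{0}\in\sN$. Then $T_{\chi}(i)=1$ for all $i\geq n_{0}$, so $x_{m}\geq\frac{1}{2^{n_{0}}}$ for all $m\geq n_{0}$ in $\sN$; applying the stratified convergence with $k:=2^{n_{0}+1}\in\sN$ yields some $M$ with $|x_{m}-x|\leq\frac{1}{2^{n_{0}+1}}$ for $m\geq M$, hence $x\geq\frac{1}{2^{n_{0}}}-\frac{1}{2^{n_{0}+1}}>0$. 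Since $x\in\R$ is standard and equals $\sup_{n\in\N}x_{n}$, some standard $x_{n}$ must be positive, forcing $T_{\chi}(i)=1$ for some $i\leq n$ and therefore $(\exists m\in\N)\neg\varphi(m)$. This is the desired instance of $\Sigma_{1}$-transfer, so $\paai$ holds, and by Theorem~\ref{firstlast2} so does $\LPOO$.

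The step I expect to be the main obstacle is bookkeeping rather than mathematics: making ``$B\in\T$'' precise for the $\Sigma_{3}$-shaped consequent $B$ — deciding, under the natural extension of Definition~\ref{bigT} beyond $\Sigma_{2}\cup\Pi_{2}$, which quantifiers of the convergence statement get pushed to $\sN$ and which to $\N_{1}$ — and, in the reverse direction, verifying that the real witnessing $B$ over $\N$ is the same object that witnesses the transferred convergence. The genuine content on either side (the standard-part construction via $\Omega$-CA under $\paai$, respectively the diagonal-sequence trick already used for $\LPRO$ in Theorem~\ref{uni2}) is routine.
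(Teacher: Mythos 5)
Your overall plan is sound, and both directions can be pushed through along the lines you sketch, but you diverge from the paper's own proof at each step, and one of those divergences buys you a genuine bookkeeping headache.

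For $\paai\Rightarrow\MCTO$, you construct the limit as the standard part of $x_{\omega}$ via $\Omega$-CA, reading the cut of $x_{\omega}$ off an $\Omega$-invariant family of comparison formulas. The paper instead uses the interval-halving argument: under $\paai$ a $\Sigma_{1}$-formula is $\Omega$-invariant, $\Omega$-CA turns it into a characteristic function, inequalities between reals become decidable, and the classical bisection construction of the supremum goes through. Your route is more ``Robinsonian'' in flavor and arguably more direct; the paper's is more explicitly algorithmic and sidesteps any question about what ``standard part of a nonstandard real'' means in this weak base theory.

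For $\MCTO\Rightarrow\paai$, your diagonal sequence $x_{n}=\sum_{i\leq n}2^{-i}T_{\chi}(i)$ is the one already used for $\LPRO$ in Theorem~\ref{uni2}; it has an \emph{infinitesimal} jump when $\neg\varphi(n_{0})$ first occurs at an infinite $n_{0}$. To see that jump you must instantiate the transferred convergence at an \emph{infinite} $k=2^{n_{0}+1}$, and therefore you must work with the convergent form of $\MCTO$ (with its leading $(\exists x\in\R)$) and pin down what $B\in\T$ means for a formula of higher complexity than Definition~\ref{bigT} covers, including the existential over $\R$ — precisely the gap you flag at the end. The paper avoids this entirely. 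It first replaces $\MCTO$ by its Cauchy version \eqref{BIGGER2}, whose consequent has no real-number existential, and then builds a sequence $z_{n}$ that tracks an arbitrary bounded increasing $w_{n}$ until the first failure of $\varphi$ and then jumps by more than a fixed \emph{standard} amount $\tfrac12$. The contradiction is obtained already at $k=2$ with $M_{2}\in\N$, comparing the standard $z_{M_{2}}\leq w$ against the nonstandard $z_{n_{0}}>w+\tfrac12$ via \eqref{tock2}; only the $k\in\N$ specialization of the transfer is needed, and no quantifier over reals enters $\T$. So the paper's big-jump construction is not cosmetic: it is what lets the transfer bookkeeping stay inside what Definition~\ref{bigT} actually defines. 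Separately, your sentence ``hence $x\geq\tfrac{1}{2^{n_{0}}}-\tfrac{1}{2^{n_{0}+1}}>0$, so some standard $x_{n}$ is positive'' is shaky when $n_{0}$ is infinite: a standard real being $\geq$ a positive infinitesimal does not by itself produce a $\Sigma_{1}$-witness for $x>0$. The clean version is to assume $(\forall n\in\N)\varphi(n)$ toward a contradiction, note the standard limit is then exactly $0$, and contradict $x_{m}\geq\tfrac{1}{2^{n_{0}}}$ against $x_{m}\leq\tfrac{1}{2^{n_{0}+1}}$ for $m\geq\max(M,n_{0})$.
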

\begin{proof}
We first prove the forward implication.  It is not difficult to prove that $\MCTO$ is equivalent to
\[\textstyle
(\forall n\in \N)(x_{n}\leq x_{n+1})\wedge (\exists N\in \N)(\forall n\in \N)( x_{n} \leq N )
\]
\vspace{-0.7cm}
\be\label{BIGGER2}
\rotatebox[origin=c]{270}{$\DI$}
\ee
\[\textstyle
(\forall k\in \N)(\exists M\in \N)(\forall m,m'\in\N)(m,m'\geq M\DI |x_{m}-x_{m'}|\leq \frac{1}{k}),
\]
as a sequence is convergent if and only if it is a Cauchy sequence\footnote{Incidentally, this is a theorem of BISH, to be found in \cite[Theorem 3.3]{bish1}.}.  Note that the latter equivalence is meant in the sense of hyperimplication instead of implication.

\medskip

It suffices to prove $\paai$, by Theorem \ref{firstlast2}.
Let $\varphi$ be in $\Delta_{0}$ and assume $(\forall n\in\N)\varphi(n)$.  Let $w_{n}$ be an increasing sequence of reals which is bounded above by $B\in\N$  and convergent to $w\in \R$.
Without loss of generality, we may assume that the formulas $(\forall n\in \N)(w_{n}\leq w_{n+1})$ and  $(\exists N\in \N)(\forall n\in \N)( w_{n} \leq N )$ are in $\T$.
Note that the predicate `$\leq$' in the previous formulas does not increase their quantifier complexity (resp.\ $\Pi_{1}$ and $\Sigma_{2}$).
We define
\[
z_{n}:=
\begin{cases}
w_{n} & \text{if } (\forall m\leq n)\varphi(n) \\
w+\sum_{i=1}^{n}\frac{B+1-w}{2^{i}} & \text{otherwise}
\end{cases}.
\]
Then, $z_{n}$ is bounded and increasing, by assumption.  Also, by the definition of $z_{n}$, the formulas $(\forall n\in \N)(z_{n}\leq z_{n+1})$ and  $(\forall n\in \N)( z_{n} \leq B+1 )$ are in $\T$.
By $\mathbb{MCT}$, we have the consequent of \eqref{BIGGER2}, and the fact that this formula is in $\T$.  This yields
\[\textstyle
(\forall k\in \sN)(\exists M\in \sN\wedge \text{$M$ is $k$-finite })(\forall m,m'\in\sN)(m,m'\geq M\DI |z_{m}-z_{m'}| \leq \frac{1}{k}),
\]
by definition.  We tacitly assume that the universal quantifier hidden in the final predicate `$\leq$' of the previous formula has been adjusted in the right way.
Hence,
\be\label{tock2}\textstyle
(\forall k\in \N)(\exists M\in \N)(\forall m,m'\in\sN)(m,m'\geq M\DI |z_{m}-z_{m'}| \leq \frac{1}{k}),
\ee
Now if $\neg\varphi(n_{0})$ for some $n_{0}\in\sN$, then $z_{n_{0}}> w+\frac{1}{2}$.  However, applying \eqref{tock2} for $k=\frac{1}{2}$, there is some $M_{2}\in \N$ such that
$z_{M_{2}}~(=w_{M_{2}}\leq w)$ and $z_{n_{0}}$ satisfy $|z_{n_{0}}-z_{M_{2}}|\leq \frac{1}{2}$, which is a contradiction.
Hence, we must have $\varphi(n)$, for \emph{all} $n\in\sN$ if $(\forall n\in\N)\varphi(n)$, which yields $\paai$.

\medskip

For the forward implication, the usual `interval halving technique' can be easily adapted to derive $\MCTO$ from $\paai$.  Indeed, by the latter, a $\Sigma_{1}$-formula $\Phi(\vx)$ is equivalent to an $\Omega$-invariant formula, and $\Omega$-CA provides a function $f(\vx)$ such that $f(\vx)=1\asa \Phi(\vx)$, for standard parameters $\vx$.  Hence, inequalities between real numbers, which are $\Sigma_{1}$ in general, become decidable and the usual interval halving argument can be performed in $\NSA+\paai$.
\end{proof}
\begin{rem}\rm As it happens, formula \eqref{tock2} has a constructive interpretation akin to the definition of Cauchy sequence in BISH.
Indeed, Bishop defines a Cauchy sequence by the usual $\Pi_{3}$-definition
\be\label{durk}\textstyle
(\forall k\in \N)(\exists M\in \N)(\forall m,m'\in\N)(m,m'\geq M\di |z_{m}-z_{m'}| \leq \frac{1}{k}),
\ee
with the \emph{extra assumption} that there is some sequence $M_{k}$ witnessing the existential quantifier.  The latter sequence is usually called a \emph{modulus}.
Now, given \eqref{tock2}, the $\N\di \N$-sequence\footnote{As in the previous proof, we assume that the final predicate `$\leq$' has been adjusted in the right way.}
\[\textstyle
g(k)=(\mu M\leq \omega)(\forall m,m'\leq \omega)(m,m'\geq M\DI |z_{m}-z_{m'}|\leq \frac{1}{k}),
\]
is $\Omega$-invariant.  Using $\Omega$-CA, we obtain a \emph{standard} function $f$ such that $f(k)=g(k)$, for $k\in \N$.  This yields
\[\textstyle
(\forall k\in \N)(\forall m,m'\in\N)(m,m'\geq f(k)\DI |z_{m}-z_{m'}| \leq \frac{1}{k}),
\]
i.e., $f$ is a modulus for $z_{m}$.  As the latter formula is equivalent to \eqref{durk}, it is also in $\T$.
\end{rem}

\subsection{Markov's principle}
In this paragraph, we consider \emph{Markov's principle}.  This principle is rejected in Constructive Analysis, but accepted in the Markovian school of recursive mathematics (See \cite[p.\ 10-11]{bridges1}).
\begin{princ}[$\MPO$]
For $P$ in $\Sigma_{1}$, we have ${\sim}{\sim}P \DI P$.
\end{princ}
Note that $\MPO$ corresponds to double-negation elimination for $\Sigma_{1}$-formulas
\begin{princ}[$\mathbb{MPR}$]
$(\forall x\in \R )({\sim}{\sim}(x>0)\DI x>0)$.
\end{princ}
Let $(\paai)_{1}$ be $\Pi_{1}$-transfer limited to $\N_{1}$, i.e., in \eqref{xyz}, we replace $^{*}\N$ by $\N_{1}$.
We have the following theorem.
\begin{thm}\label{sproof}
In $\NSA$, $\MPO$ is equivalent to $\mathbb{MPR}$, and to $(\paai)_{1}$
\end{thm}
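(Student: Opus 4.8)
The plan is to obtain both equivalences by mechanically unwinding the hyperconnective definitions for $\MPO$, and then reusing the real-number construction from the proof of Theorem~\ref{uni2} to bridge to $\mathbb{MPR}$.

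First I would compute what $\MPO$ actually asserts. Write a $\Sigma_{1}$-formula $P$ as $(\exists n\in\N)\varphi(n)$ with $\varphi\in\Delta_{0}$ (possibly carrying standard parameters $\vx$). Since $P$ is not suitable for transfer, $P\in\T$ is trivial, so by Definition~\ref{hypercon} the formula ${\sim}P$ collapses to $P\di 0=1$, i.e.\ to $(\forall n\in\N)\neg\varphi(n)$. This latter formula \emph{is} universal, hence by Definition~\ref{bigT} the clause ${\sim}P\in\T$ is $(\forall n\in\N)\neg\varphi(n)\di(\forall n\in\sN)\neg\varphi(n)$, and therefore ${\sim}{\sim}P={\sim}P\DI 0=1$ collapses to $\neg[\,{\sim}P\wedge({\sim}P\in\T)\,]$; as the conjunction $(\forall n\in\N)\neg\varphi(n)\wedge[(\forall n\in\N)\neg\varphi(n)\di(\forall n\in\sN)\neg\varphi(n)]$ is just $(\forall n\in\sN)\neg\varphi(n)$, we get that ${\sim}{\sim}P$ is exactly $(\exists n\in\sN)\varphi(n)$. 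By Definition~\ref{bigT}(2), ${\sim}{\sim}P\in\T$ is then $(\exists n\in\sN)\varphi(n)\di(\exists n\in\N_{1})\varphi(n)$, while $P\in\T$ is trivial. Substituting into $\DI$, the formula ${\sim}{\sim}P\DI P$ becomes
\[
\big[(\exists n\in\sN)\varphi(n)\wedge\big((\exists n\in\sN)\varphi(n)\di(\exists n\in\N_{1})\varphi(n)\big)\big]\di(\exists n\in\N)\varphi(n),
\]
whose hypothesis is equivalent to $(\exists n\in\N_{1})\varphi(n)$ (and the converse implication $(\exists n\in\N)\varphi(n)\di(\exists n\in\N_{1})\varphi(n)$ is anyway provable, since $\N\subseteq\N_{1}$). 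Hence $\MPO$ is precisely the scheme $(\exists n\in\N_{1})\varphi(n)\asa(\exists n\in\N)\varphi(n)$ for all $\Delta_{0}$ $\varphi$ and all standard parameters, which by contraposition is exactly $(\paai)_{1}$ as in \eqref{xyz} with $\sN$ replaced by $\N_{1}$. This gives $\MPO\asa(\paai)_{1}$.

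For $\mathbb{MPR}$, the key observation is that $x>0$ abbreviates the $\Sigma_{1}$-formula $(\exists n\in\N)(q_{n}>\frac{1}{2^{n}})$ from Definition~\ref{keepinitreal2}, with $\Delta_{0}$-matrix $q_{n}>\frac{1}{2^{n}}$ and $x$ among the parameters. Thus the computation above applies verbatim: the body ${\sim}{\sim}(x>0)\DI(x>0)$ of $\mathbb{MPR}$ reduces to $(\exists n\in\N_{1})(q_{n}>\frac{1}{2^{n}})\di(\exists n\in\N)(q_{n}>\frac{1}{2^{n}})$, an instance of $(\paai)_{1}$. Therefore $(\paai)_{1}$ yields $\mathbb{MPR}$ at once: the outer wrapper ``$x\in\R\DI\dots$'' is harmless, since when proving $\mathbb{MPR}$ we are given both $x\in\R$ and $x\in\R\in\T$ as hypotheses, and the $\T$-clause of the consequent is trivial (an implication between two $\Sigma_{1}$-type formulas is not suitable for transfer). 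Conversely, assuming $\mathbb{MPR}$ and given $\varphi\in\Delta_{0}$, I define the real $x$ by $q_{n}:=\sum_{i=0}^{n}\frac{1}{2^{i}}T_{\psi}(i)$ with $\psi(i)\equiv(\exists m\leq i)\varphi(m)$, exactly as in the proof of Theorem~\ref{uni2}; the arithmetic estimate $|q_{m}-q_{n}|\leq\sum_{i=m+1}^{n}\frac{1}{2^{i}}<\frac{1}{2^{m}}$ shows $x\in\R$ and, being a provable $\Delta_{0}$-consequence at every level, also shows $x\in\R\in\T$. Applying $\mathbb{MPR}$ to this $x$ and using the construction's level-by-level equivalence $(\exists n)(q_{n}>\frac{1}{2^{n}})\asa(\exists n)\varphi(n)$ (over $\N$, and over $\N_{1}$), proved as in Theorem~\ref{uni2}, yields $(\exists n\in\N_{1})\varphi(n)\di(\exists n\in\N)\varphi(n)$, i.e.\ $(\paai)_{1}$, and hence $\MPO$ by the first part.

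I do not expect a genuine obstacle here; the work is essentially careful bookkeeping of the $\in\T$ clauses, i.e.\ tracking at each step which subformula is ``suitable for transfer'' in the sense of Definition~\ref{bigT}. The one point deserving a sentence of care is the verification that the \emph{weak} transfer $(\paai)_{1}$ --- which only reaches down to $\N_{1}$ and not from $\sN$ --- really suffices for $\mathbb{MPR}$: this works because ${\sim}{\sim}(x>0)$ already lives at the $\sN$ level, its $\T$-clause pulls that existential back to $\N_{1}$, and the surviving implication then only needs to descend from $\N_{1}$ to $\N$. The pleasant surprise driving the proof is the collapse ${\sim}{\sim}P\equiv(\exists n\in\sN)\varphi(n)$, which turns the whole of $\MPO$ into plain $\Sigma_{1}$-transfer from the first nonstandard level.
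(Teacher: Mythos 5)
Your proof is correct and takes essentially the same route as the paper's: you unwind $\MPO$ via the collapse ${\sim}{\sim}[(\exists n\in\N)\varphi(n)]\equiv(\exists n\in\sN)\varphi(n)$ to show it is precisely $(\paai)_{1}$, and you close the loop with $\mathbb{MPR}$ by the same $q_{n}=\sum_{i\leq n}2^{-i}T_{\psi}(i)$ construction imported from Theorem~\ref{uni2}. The only difference is cosmetic: you spell out the $\T$-clause bookkeeping and the $(\paai)_{1}\Rightarrow\mathbb{MPR}$ direction explicitly, where the paper leaves the latter implicit because $\mathbb{MPR}$ is an instance of $\MPO$.
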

\begin{proof}
For $\varphi$ in $\Delta_{0}$, the formula ${\sim}{\sim}[(\exists n\in\N)\varphi(n)]$ is $(\exists n\in \sN)\varphi(n)$.  Hence, $\MPO$ for this formula yields
\be\label{kirk}
\big[(\exists n\in \sN)\varphi(n) \wedge [(\exists n\in \sN)\varphi(n)\di (\exists n\in \N_{1})\varphi(n)]\big] \di (\exists n\in \N)\varphi(n),
\ee
which reduces to $(\exists n\in \N_{1})\varphi(n)\di (\exists n\in \N)\varphi(n)$.  What is left to prove is that $\mathbb{MPR}$ implies $\MPO$.

\medskip

Hence, let $\varphi$ be $\Delta_{0}$ and assume  that ${\sim}{\sim}[(\exists n\in \N)\varphi(n)]$, i.e., $(\exists n\in \sN)\varphi(n)$, and that this formula is in $\T$.
By \eqref{kirk}, this yields $(\exists n\in \N_{1})\varphi(n)$.
Now define the real $x$ as $q_{n}:=\sum_{i=0}^{n}\frac{1}{2^{i}}T_{\psi}(i)$, for $\psi(i)$ defined as $(\exists m\leq i)\varphi(m)$, i.e., as in the proof of Theorem \ref{uni2}.
Note that we also have that `$x\in\R$' is in $\T$.
Now let $n_{1}\in \N_{1}$ be such that $\varphi(n_{1})$.  Then $q_{n}>q_{n_{1}}\geq \frac{1}{2^{n_{1}}}$, for all $n>n_{1}$, and we have $q_{n_{1}+1}>\frac{1}{2^{n_{1}+1}}$, i.e.,
$(\exists n\in \N_{1})(q_{n}>2^{n})$, which is easily seen to be ${\sim}{\sim}(x>0)$.   By $\mathbb{MPR}$, we have $x>0$, implying that $(\exists n\in \N)\varphi(n)$.
Hence, $\MPO$ follows and we are done.
\end{proof}
By the previous theorem, $\MPO$ is not provable in $\NSA$, and weaker than $\LPOO$.
It is not difficult to produce models of $\NSA$ in which, e.g., $\MPO$ (and hence $\LPOO$) fails, or where $\MPO$ holds, but $\LPOO$ fails.
Hence, $\MPO$ is \emph{strictly} weaker than $\LPOO$.  Similar results exists for other principles, such as $\WLPOO$.

\medskip

Furthermore, it can be shown that $\mathbb{\Delta}_{1}$-formulas\footnote{A formula $\varphi$ is $\mathbb{\Delta}_{1}$ if there are $\varphi_{1}\in \Pi_{1}$ and $\varphi_{2}\in \Sigma_{1}$ such that $\varphi\ASA \varphi_{1}\ASA \varphi_{2}$.} are $\Omega$-invariant \emph{in the presence of} $\MPO$.  Hence, by the absence of $\MPO$ in $\NSA$, it would seem that not all $\mathbb{\Delta}_{1}$-formulas correspond to `finite procedures', if the latter is interpreted as $\Omega$-invariant procedures.  Similarly, not all recursive operations are finite procedures\footnote{An example is a Turing machine $M$ for which it is impossible that it does not hold.  As no algorithmic upper bound on the halting time of $M$ is given, we cannot conclude that $M$ halts in BISH.} in BISH as Markov's principle is rejected \cite[p.\ 10-11]{bridges1}.
Hence, we observe a deep similarity here between $\NSA$ and BISH.

\medskip

Finally, as is the case in Constructive Analysis, double-negation elimination for $\Pi_{1}$-formulas is provable in the base theory.
\begin{thm}
In $\NSA$, we have ${\sim}{\sim}R\DI R$ and ${\sim}{\sim}R\di R$, for all $R$ in $\Pi_{1}$.
\end{thm}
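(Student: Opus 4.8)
The plan is to unfold the hyperconnectives and the set $\T$ until both ${\sim}{\sim}R$ and $R$ sit in canonical form, after which the two implications fall out of the inclusions $\N\subsetneq\N_{1}\subsetneq\sN$, exactly as in the $\MPO$ computation in the proof of Theorem \ref{sproof}.

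First I would write $R\equiv(\forall n\in\N)\varphi(n)$ with $\varphi\in\Delta_{0}$. Since $R$ is universal it is suitable for transfer, so $R\in\T$ is the non-trivial formula $(\forall n\in\N)\varphi(n)\di(\forall n\in\sN)\varphi(n)$, whereas $0=1\in\T$ is trivial; hence by Definition \ref{hypercon} the formula ${\sim}R$ is $[R\wedge R\in\T]\di 0=1$, i.e.\ $\neg[R\wedge(R\di{}^{*}R)]$. As ${}^{*}R\di R$ holds automatically ($\N\subseteq\sN$ by the embedding and end-extension axioms), this reduces classically to $\neg{}^{*}R$, i.e.\ to the canonical existential formula $(\exists n\in\sN)\neg\varphi(n)$. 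Being existential over $\sN$, this has $\in\T$ equal to $(\exists n\in\sN)\neg\varphi(n)\di(\exists n\in\N_{1})\neg\varphi(n)$ by Definition \ref{bigT}(2); therefore ${\sim}{\sim}R=[{\sim}R\wedge{\sim}R\in\T]\di 0=1$ reduces — using $\N_{1}\subseteq\sN$ — to $\neg(\exists n\in\N_{1})\neg\varphi(n)$, i.e.\ to $(\forall n\in\N_{1})\varphi(n)$.

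With ${\sim}{\sim}R$ identified as $(\forall n\in\N_{1})\varphi(n)$, the implication ${\sim}{\sim}R\di R$ is just $(\forall n\in\N_{1})\varphi(n)\di(\forall n\in\N)\varphi(n)$, immediate from $\N\subsetneq\N_{1}$. For ${\sim}{\sim}R\DI R$ I would use that ${\sim}{\sim}R$ is universal over $\N_{1}$, so — by the stratified analogue of Definition \ref{bigT}(1), since $\N_{1}$ and $\sN$ are linked by $\Delta_{0}$-transfer just as $\N$ and $\sN$ are — the formula ${\sim}{\sim}R\in\T$ is $(\forall n\in\N_{1})\varphi(n)\di(\forall n\in\sN)\varphi(n)$. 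Hence from ${\sim}{\sim}R\wedge{\sim}{\sim}R\in\T$ one gets $(\forall n\in\sN)\varphi(n)$, i.e.\ ${}^{*}R$, which yields $R$ and also $R\in\T$ (the latter being $R\di{}^{*}R$, whose consequent ${}^{*}R$ now holds), so $R\wedge R\in\T$ follows and ${\sim}{\sim}R\DI R$ is established.

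There is no deep obstacle here; the care needed is purely in the bookkeeping: performing the classical reductions ${\sim}R\equiv\neg{}^{*}R$ and ${\sim}{\sim}R\equiv(\forall n\in\N_{1})\varphi(n)$ correctly — each hinging on an automatic one-directional transfer ($\N\subseteq\sN$, resp.\ $\N_{1}\subseteq\sN$) — and, as in the proofs of Theorems \ref{firstlast2}, \ref{uni2} and \ref{sproof}, applying the ``$\in\T$'' clauses to these reduced canonical forms rather than to the literal unfolded implications. The one point that actually matters for the $\DI$-direction is that ${\sim}{\sim}R\in\T$ be the non-trivial $(\forall n\in\N_{1})\varphi(n)\di(\forall n\in\sN)\varphi(n)$: were it trivial, the $\T$-component of $R$ in the conclusion of ${\sim}{\sim}R\DI R$ could not be recovered.
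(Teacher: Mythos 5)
Your proof is correct and follows the paper's own route: reduce ${\sim}R$ to $(\exists n\in\sN)\neg\varphi(n)$, then ${\sim}{\sim}R$ to $(\forall n\in\N_{1})\varphi(n)$, with the two implications falling out of the level inclusions. You additionally spell out what the paper dismisses as ``the theorem now follows easily'' by reading $({\sim}{\sim}R)\in\T$ as the non-trivial $(\forall n\in\N_{1})\varphi(n)\di(\forall n\in\sN)\varphi(n)$, which is the reading consistent with the paper's stratified-transfer convention (universal formulas transfer maximally to $\sN$) and is indeed necessary for the $\DI$-direction to be provable in $\NSA$.
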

\begin{proof}
Let $\varphi$ be $\Delta_{0}$ and consider ${\sim}[(\forall n\in \N)\varphi(n)]$.  By definition, the latter
is just $(\exists n\in \sN)\neg\varphi(n)$.  Hence, ${\sim}{\sim}[(\forall n\in \N)\varphi(n)]$ is $(\exists n\in \sN)\neg\varphi(n)\DI 0=1$, i.e.,
\[
\big[(\exists n\in \sN)\neg\varphi(n) [(\exists n\in \sN)\neg\varphi(n)\di (\exists n\in \N_{1})\neg\varphi(n)]\big] \di 0=1,
\]
which yields
\[
(\forall n\in \sN)\varphi(n) \vee [(\exists n\in \sN)\neg\varphi(n)\wedge (\forall n\in \N_{1})\varphi(n)].
\]
The latter is just $(\forall n\in \N_{1})\varphi(n)$.  The theorem now follows easily.
\end{proof}
\subsection{Concluding remarks}
In the previous section, we obtained several equivalences in $\NSA$ which are very reminiscent of Constructive Reverse Mathematics.
Obviously, these results do not even scratch the surface, and we could prove many more equivalences, given more space (e.g., for LLPO, WLPO, WMP, MP$^{\vee}$, FAN$_{\Delta}$, \dots).
In particular, we could obtain most equivalences in Hajime Ishihara's survey paper \cite{ishi1} of Constructive Reverse Mathematics in the same way.  
The following theorem provides an example.
\begin{thm} In $\NSA$, the following are equivalent.
\begin{enumerate}
\item \textup{$\mathbb{LLPO}$:}  ${\sim}(P \wedge Q)\Rrightarrow{\sim} P \HD {\sim} Q$ \quad $(P,Q\in \Sigma_{1})$.  \label{kyo3}
\item \textup{$\mathbb{LLPR}$:} $(\forall x\in \R)[{\sim}(x>0)\HD {\sim}(x<0)]$. \label{kyo4}
\item \textup{$\mathbb{NIL}$:} $(\forall x,y\in \R)(xy=0\Rrightarrow x=0\HD y=0)$.
\item \textup{$\mathbb{CLO}$:} For all $x,y\in \R$ with ${\sim}(x<y)$, $\{x,y\}$ is a closed set.
\item \textup{$\mathbb{IVT}$:} a version of the intermediate value theorem.
\item \textup{$\mathbb{WEI}$:} a version of the Weierstra\ss~extremum theorem.
\end{enumerate}
 In $\NSA$, the following are equivalent.
\begin{enumerate}
\item \textup{$\mathbb{WLPO}$:}  $ ({\sim}{\sim}P)\HD P$ \quad $(P\in \Sigma_{1})$.
\item \textup{$\mathbb{WLPO}$:} $(\forall x\in \R)[{\sim}{\sim}(x>0)\HD (x>0)]$.
\item \textup{$\mathbb{DISC}$:} There exists a discontinuous function from $2^{\N}$ to $\N$.
\end{enumerate}
\end{thm}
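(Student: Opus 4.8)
The plan is to reuse, essentially verbatim, the mechanism of the proofs of Theorems~\ref{firstlast2}, \ref{uni2} and~\ref{sproof}: unfold each hyperconnective using Definitions~\ref{hypercon} and~\ref{bigT}, reduce the ``abstract'' principles $\LLPOO$ and $\WLPOO$ to concrete transfer principles in the style of the equivalence $\LPOO\asa\paai$, and then transcribe the relevant classical equivalences of Constructive Reverse Mathematics from~\cite{ishi1} into $\NSA$, reading ``algorithm'' as $\Omega$-invariance and ``provably'' as Transfer. Two tools do all the real work: the modulus lemma (Theorem~\ref{moduluslemma}) together with $\Delta_{0}$-transfer, which lets one decide at which level of the stratification $\N\subsetneq\N_{1}\subsetneq\cdots\subsetneq\sN$ a subformula lives and hence compute its $\T$-membership; and $\Omega$-CA (item~\ref{oca} of Definition~\ref{defP}), which replaces an $\Omega$-invariant decision procedure by a genuine standard function whenever the BISH argument invokes a constructively defined map.

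For the first block I would first unfold $\LLPOO$. Writing $P\equiv(\exists n\in\N)\varphi(n)$ and $Q\equiv(\exists m\in\N)\chi(m)$, the hyperimplication ${\sim}(P\wedge Q)\Rrightarrow({\sim}P\HD{\sim}Q)$ collapses --- exactly as ${\sim}[(\exists n\in\N)\varphi(n)]$ was collapsed in the proof of Theorem~\ref{firstlast2} --- to the statement: \emph{whenever $(\forall n,m\in\sN)\neg(\varphi(n)\wedge\chi(m))$, there is an $\Omega$-invariant $\psi$ with $\psi\di(\forall n\in\sN)\neg\varphi(n)$ and $\neg\psi\di(\forall m\in\sN)\neg\chi(m)$}, a ``disjunctive $\Pi_{1}$-transfer'' principle. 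The equivalence with $\mathbb{LLPR}$ is then the BISH proof of LLPO $\leftrightarrow$ LLPR carried over: one direction applies $\LLPOO$ with $P\equiv x>0$, $Q\equiv x<0$ (a Cauchy sequence is never simultaneously positive and negative, so the hypothesis ${\sim}(P\wedge Q)$ is met), checks ${\sim}(x>0),{\sim}(x<0)\in\T$ from the derived transfer, and appends $\Pi_{1}$-transfer for the $\Pi_{1}$ formula `$x\in\R$' to install the `$(\forall x\in\R)$' quantifier, just as at the end of the proof of Theorem~\ref{uni2}; the converse feeds $\mathbb{LLPR}$ the real whose $n$-th rational approximation is $\sum_{i\le n}2^{-i}(T_{\varphi'}(i)-T_{\chi'}(i))$ with $\varphi'(i)\equiv(\exists k\le i)\varphi(k)$, $\chi'(i)\equiv(\exists k\le i)\chi(k)$, verifies that `$x\in\R$' is in $\T$ as in Theorem~\ref{uni2}, and unwinds. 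The remaining members $\mathbb{NIL}$, $\mathbb{CLO}$, $\mathbb{IVT}$, $\mathbb{WEI}$ are obtained the same way: each is the $\NSA$-rendering of the classical equivalence ``LLPO $\leftrightarrow$ (that statement)'' of~\cite{ishi1}, with all real comparisons (which are $\Sigma_{1}$, hence after $\LLPOO$ and $\Omega$-CA decidable by a standard function) made decidable, and with each auxiliary convergence claim --- the Cauchy property of the bisection points for $\mathbb{IVT}$, of the supremum approximants for $\mathbb{WEI}$, the locatedness computation for $\mathbb{CLO}$ --- placed in $\T$ before it is invoked.

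For the second block, the equivalence between the abstract and real-valued forms of $\WLPOO$ is the argument of Theorem~\ref{uni2} for LPO $\leftrightarrow$ LPR with the single hypernegation replaced by a double one: recalling from the proof of Theorem~\ref{sproof} that ${\sim}{\sim}[(\exists n\in\N)\varphi(n)]$ is $(\exists n\in\sN)\varphi(n)$, both forms reduce to the same weak transfer statement relating $\Sigma_{1}$-witnesses at levels $\sN$, $\N_{1}$ and $\N$, and the real $q_{n}:=\sum_{i\le n}2^{-i}T_{\psi}(i)$ of the proof of Theorem~\ref{uni2} realises the translation. For $\DISCO$: given $\WLPOO$, the decision procedure it supplies is $\Omega$-invariant and, via $\Omega$-CA, yields a \emph{standard} function $F\colon 2^{\N}\to\N$ --- concretely, the characteristic function of the singleton $\{\overline{0}\}$ --- which is visibly discontinuous at $\overline{0}$, so $\DISCO$ holds; conversely, if $F$ is discontinuous at some $\beta\in 2^{\N}$, then for each $\alpha$ the values of $F$ on the hyperfinite modifications of $\beta$ obtained from $\alpha$ furnish an $\Omega$-invariant test for `$\alpha$ eventually leaves $\beta$', which after $\Omega$-CA is exactly $\WLPOO$ --- the $\NSA$-transcription of the BISH equivalence WLPO $\leftrightarrow$ DISC.

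The main obstacle will be $\mathbb{IVT}$ and $\mathbb{WEI}$. First, one must pin down the \emph{right} weak formulations: a too strong intermediate-value or extremum statement is not derivable even from $\LLPOO$, and a too weak one is already a theorem of $\NSA$, so the formulation must be the exact statement under an appropriate non-degeneracy hypothesis, matching the version in~\cite{ishi1}. Second, one must run the interval-subdivision construction for $\mathbb{IVT}$ and the supremum construction for $\mathbb{WEI}$ with every real comparison replaced by an $\Omega$-invariant decision (legitimate only once $\LLPOO$ is on the table), and then certify that the output --- the root, resp.\ the maximiser, together with the relevant continuity and convergence data --- lands in $\T$ at the correct level of the stratification. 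This level-bookkeeping through iterated constructions is the genuine labour; everything else is a faithful copy of the BISH proofs and needs no idea beyond those already present in Theorems~\ref{firstlast2}--\ref{sproof}.
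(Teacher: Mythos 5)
The paper does not prove this theorem: it is stated without a proof at the end of the ``Concluding remarks'' subsection, and the surrounding text explicitly presents it as ``an example'' of equivalences obtainable ``in the same way'' as Theorems~\ref{firstlast2}, \ref{uni2} and~\ref{sproof}. Your plan --- unfold the hyperconnectives via Definitions~\ref{hypercon} and~\ref{bigT}, reduce each abstract principle to a transfer statement at the appropriate level of the stratification $\N\subsetneq\N_{1}\subsetneq\cdots\subsetneq\sN$, transcribe the BISH equivalences of~\cite{ishi1}, and certify $\T$-membership using $\Delta_{0}$-transfer, the transfer rule and $\Omega$-CA --- is precisely the template the paper intends, so there is no divergence to report; like the paper, you honestly leave the labour (exact formulations of $\mathbb{IVT}$ and $\mathbb{WEI}$, level-bookkeeping through iterated constructions) unspecified.

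Two places need a second look before the plan can be carried out. First, the paper writes $\WLPOO$ as $({\sim}{\sim}P)\HD P$, and taking this literally puts an existential over $\N_{1}$ on one branch of the hyperdisjunction and an existential over $\N$ on the other, which is not a WLPO-type decision; for your reduction to a ``weak transfer statement relating $\Sigma_{1}$-witnesses at levels $\sN$, $\N_{1}$ and $\N$'' to come out, one should work with $({\sim}{\sim}P)\HD({\sim}P)$ --- the faithful transcription of $\neg\neg P\vee\neg P$ --- which does unfold to an $\Omega$-invariant choice between $(\exists n\in\N_{1})\varphi(n)$ and $(\forall n\in\sN)\neg\varphi(n)$, i.e.\ $\Sigma_{1}$-transfer from $\sN$ to $\N_{1}$. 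Second, for $\DISCO$ your forward direction leans on $\Omega$-CA to ``yield a standard function $F\colon2^{\N}\to\N$'', but item~\ref{oca} of Definition~\ref{defP} produces only a standard set $X\subset\N$ from a single $\Omega$-invariant $\Delta_{0}$-formula; turning a $2^{\N}$-parametrised family of $\Omega$-invariant decisions into a standard third-order object in the two-sorted language of $\NSA$ is a genuine step that neither your sketch nor the paper supplies, and the converse (``hyperfinite modifications of $\beta$'') likewise needs to be tightened to the usual BISH stabilising-sequence argument so that the $\Omega$-invariance and $\T$-membership of the resulting test are visible.
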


%
%
\begin{rem}\label{reuniting}\rm
In 1999, a conference entitled \emph{Reuniting the antipodes} was organized in Venice \cite{venice}.
The goal of this meeting was to bring together the communities of Nonstandard Analysis and Constructive Analysis in order to discover common ground.
In \cite{jaap1}, the review of \cite{cross}, van Oosten states that little common ground had been found.
He suggests, however, one notable exception: Erik Palmgren, who indeed has been developing Nonstandard Analysis inside the constructive framework (See, e.g., \cite{palmdijk,palmboominvenetie}).
This paper's approach, although dual in nature, can also been as an attempt at `reuniting the antipodes'.  In particular, despite Bishop's rather negative remarks regarding Nonstandard Analysis \cite{schizo}, the results in this paper can be interpreted positively, namely as providing Bishop's notion of algorithm with a certain \emph{robustness}.
\end{rem}

\bibliographystyle{eptcs}
\bibliography{sanders}

\end{document}